\algnewcommand\algorithmicinput{\textbf{INPUT:}}
\algnewcommand\INPUT{\item[\algorithmicinput]}
\algnewcommand\algorithmicoutput{\textbf{OUTPUT:}}
\algnewcommand\OUTPUT{\item[\algorithmicoutput]}
\newcommand{\OCT}{\textsc{Odd Cycle Transversal}\xspace}
\newcommand{\vc}{\mathrm{vc}}
\newcommand{\abs}[1]{\left| #1 \right|}
\newcommand{\myparagraph}[1]{\vspace{1em}\noindent\textbf{#1:}}
\newtheorem{theorem}{Theorem}
\newtheorem{lemma}{Lemma}
\begin{document}

\title{\Large Branch-and-Reduce Exponential/FPT Algorithms in Practice: \\ A Case Study of Vertex Cover}
\author{
Takuya Akiba\thanks{Department of Computer Science,
    Graduate School of Information Science and Technology,
    The University of Tokyo.
    \texttt{t.akiba@is.s.u-tokyo.ac.jp}}
\and
Yoichi Iwata\thanks{Department of Computer Science,
    Graduate School of Information Science and Technology,
    The University of Tokyo.
    \texttt{y.iwata@is.s.u-tokyo.ac.jp}}
}
\date{}
\maketitle

\begin{abstract}
We investigate the gap between theory and practice for exact branching algorithms.
In theory, branch-and-reduce algorithms
currently have the best time complexity for numerous important problems.
On the other hand, in practice,
state-of-the-art methods are based on different approaches,
and the empirical efficiency of such theoretical algorithms have seldom been investigated
probably because they are seemingly inefficient because of the plethora of complex reduction rules.
In this paper,
we design a branch-and-reduce algorithm for the vertex cover problem
using the techniques developed for theoretical algorithms
and compare its practical performance with other state-of-the-art empirical methods.
The results indicate that branch-and-reduce algorithms are actually quite practical
and competitive with other state-of-the-art approaches for several kinds of instances,
thus showing the practical impact of theoretical research on branching algorithms.

\end{abstract}

\section{Introduction}
Branching algorithms have been both theoretically and experimentally well studied to exactly solve NP-hard problems.
However, there is a gap between the theoretically fastest algorithms (i.e., those currently having the best time complexity) and
the empirically fastest algorithms (i.e., those currently with the best running time for popular benchmark instances).
In the theoretical research on exponential complexity or parameterized complexity of branching algorithms,
\emph{branch-and-reduce} methods, which involve a plethora of branching and reduction rules without using any lower bounds,
currently have the best time complexity for a number of important problems, such as \textsc{Independent Set} (or,
equivalently, \textsc{Vertex Cover})~\cite{DBLP:conf/isaac/XiaoN13,DBLP:journals/tcs/ChenKX10}, \textsc{Dominating Set}~\cite{DBLP:conf/iwpec/Iwata11}, and \textsc{Directed Feedback Vertex Set}~\cite{DBLP:conf/ictcs/Razgon07}.
On the other hand, in practice, \emph{branch-and-bound} methods that involve problem-specific lower bounds or LP-based
\emph{branch-and-cut} methods, which generate new cuts to improve the lower bounds, are often used.
While a number of complex rules have been developed to improve time complexity in theoretical research,
they seldom have been used for these empirical methods.

In this paper, we study the practical impact of theoretical research on branching algorithms.
As a benchmark problem, we choose \textsc{Vertex Cover} because it has been both theoretically and empirically well studied.
In this study, we design an algorithm that combines a variety of rules and lower bounds from several theoretical
studies.
We also develop new rules, called the \emph{packing branching} and \emph{packing reduction} rules,
which are inspired by these previous studies.
Then, we conduct experiments on a variety of instances and compare our algorithm with two state-of-the-art empirical
methods: a branch-and-cut method by a commercial integer programming solver, CPLEX, and a branch-and-bound method
called MCS~\cite{clique/mcs_walcom10}.
Although the rules in our algorithm are not designed for specific instances but are developed
for theoretical purposes, the results show that our algorithm is actually quite practical and competitive with other
state-of-the-art approaches for several cases.

\subsection{Relations to Theoretical Research on Exact Algorithms for Vertex Cover}
We introduce recent theoretical research on exact algorithms for \textsc{Vertex Cover}.
Two types of research exists: exact exponential algorithms, which analyze the exponential complexity
with regard to the number of vertices, and FPT algorithms, which introduce a parameter to the problem
and analyze the parameterized complexity with regard to both the parameter and the graph size.

First, we explain how the algorithms are designed and analyzed using a simple example.
Let us consider a very simple algorithm that selects a vertex $v$ and branches into two cases: either 1) including $v$ to the
vertex cover or 2) discarding $v$ while including its neighbors to the vertex cover.
Apparently, this algorithm runs in $O^*(2^n)$\footnote{The $O^*$ notation hides factors polynomial in $n$.} time.
Can we prove a better complexity?
The answer to this question would be No.
When a graph is a set of $n$ isolated vertices, the algorithm needs to branch on each vertex, which takes
$\Omega^*(2^n)$ time.
To avoid this worst case, we can add the following reduction rule: if a graph is not connected, we can solve each
connected component separately.
Now, we can assume that $v$ has a degree of at least one.
Then, after the second case of the branching, where $v$ is discarded and its neighbors are included, the number of
vertices to be considered decreases by at least two.
Thus, by solving the recurrence of $T(n)\leq T(n-1)+T(n-2)$, we can prove a complexity of $O^*(1.6181^n)$.
The worst case occurs when we continue to select a vertex of degree one.
Here, we note that if $n$ is at least three, a vertex of degree at least two always exists.
Thus, by adding the following branching rule, we can avoid this worst case: select a vertex of the maximum degree.
Now, we can assume that $v$ has a degree of at least two, and by solving the recurrence of $T(n)\leq T(n-1)+T(n-3)$, we can
prove the complexity of $O^*(1.4656^n)$.
We continue this process and create increasingly complex rules to avoid the worst case and improve the complexity.
Thus, currently, the theoretically fastest algorithms involve a number of complicated rules.
Although much of the current research uses a more sophisticated analytical tool called the measure and conquer
analysis~\cite{vc_reduction/measure_and_conquer},
the design process is basically the same.

As for exact exponential algorithms,
since Fomin, Grandoni, and Kratsch~\cite{vc_reduction/measure_and_conquer} gave an $O^*(1.2210^n)$-time algorithm by
developing the measure and conquer analysis, several improved algorithms have been
developed~\cite{DBLP:conf/fsttcs/KneisLR09,DBLP:journals/algorithmica/BourgeoisEPR12,DBLP:conf/isaac/XiaoN13}.
Since improving the complexity on sparse graphs is known to also improve the
complexity on general graphs~\cite{DBLP:journals/algorithmica/BourgeoisEPR12}, algorithms for sparse graphs also have
been well studied~\cite{DBLP:journals/jda/Razgon09,DBLP:journals/algorithmica/BourgeoisEPR12,DBLP:conf/isaac/XiaoN13}.
Among these algorithms, we use rules from the algorithm for general graphs by
Fomin~et~al.~\cite{vc_reduction/measure_and_conquer}, and the algorithm for sparse graphs by Xiao and
Nagamochi~\cite{DBLP:conf/isaac/XiaoN13}.
These rules are also contained in many of the other algorithms.
We also develop new rules inspired from the satellite rule presented by
Kneis, Langer, and Rossmanith~\cite{DBLP:conf/fsttcs/KneisLR09}.
Since our algorithm completely contains the rules of the algorithm by Fomin~et~al., our algorithm also can be proved to
run in $O^*(1.2210^n)$ time.

On FPT algorithms, \textsc{Vertex Cover} has been studied under various parameterizations.
Among them, the difference between the LP lower bound and the IP optimum is a recently developed parameter; however,
many interesting results have already been
obtained~\cite{DBLP:journals/toct/CyganPPW13,DBLP:journals/corr/abs-1203-0833,bip2/iwata14,DBLP:conf/soda/Wahlstrom14}.
While the exact exponential algorithms do not use any lower bounds to prune the search, with this parameterization,
the current fastest algorithms are based on the branch-and-bound method and use a (simple) LP lower bound to prune the
search.
Let $k$ be the parameter value, i.e., the difference between the LP lower bound and the IP optimum.
In our algorithm, we use an LP-based reduction rule from the $O(4^k m+m\sqrt{n})$\footnote{The algorithm runs in
$O(4^k m)$ time after computing the LP lower bound, which can be computed in $O(m\sqrt{n})$ time.}-time
algorithm by Iwata, Oka, and Yoshida~\cite{bip2/iwata14} and the LP lower bound.
Since we do not give the parameter to the algorithm, its search space may not be bounded by the
parameter.
However, if we were to use the iterative deepening strategy (we do not use it in this experiment), the running time of
our algorithm would also be bounded by $O^*(4^k)$.
The research on this parameterization also suggests that for many other problems, including \textsc{Odd Cycle
Transversal}, the fastest way to solve them is to reduce them into \textsc{Vertex Cover}.
Therefore, we conduct experiments on the graph reduced from an instance of \textsc{Odd Cycle Transversal}.
The results show that solving \textsc{Odd Cycle Transversal} through the reduction to \textsc{Vertex Cover}
strongly outperforms the state-of-the-art algorithm for \textsc{Odd Cycle Transversal}.
In our experiments, we used two different lower bounds that may give a better lower bound than LP relaxation.
We also investigated the parameterized complexity above these lower bounds.

\subsection{Organization}
In Section~\ref{sec:preliminaries}, we give definitions used in this paper.
The overview of our algorithm is described in Section~\ref{sec:overview}.
In Sections~\ref{sec:branching}, \ref{sec:reduction}, and \ref{sec:lower_bounds}, we give a list of the branching rules, the reduction
rules used in our algorithm, and the lower bounds we use, respectively.
We explain our experimental results in Section~\ref{sec:experiments}.
We investigate the parameterized complexity above the lower bounds we use in Appendix~\ref{sec:aboveLB}.
Finally, we conclude in Section~\ref{sec:conclusion}.

\section{Preliminaries}\label{sec:preliminaries}
Let $G=(V,E)$ be an undirected graph.
The \emph{degree} of a vertex $v$ is denoted by $d(v)$.
We denote the \emph{neighborhood} of a vertex $v$ by $N(v)=\{u\in V\mid uv\in E\}$
and the \emph{closed neighborhood} by $N[v]=N(v)\cup\{v\}$.
For a vertex subset $S\subseteq V$, we use the notation of $N(S)=\bigcup_{v\in S}N(v)\setminus S$ and
$N[S]=N(S)\cup S$.
The set of vertices at distance $d$ from a vertex $v$ is denoted by $N^d(v)$.
For a vertex subset $S\subseteq V$, we denote the induced subgraph on the vertex set $V\setminus S$ by $G-S$.
When $S$ is a single vertex $v$, we simply write $G-v$.

A \emph{vertex cover} of a graph $G$ is a vertex subset $C\subseteq V$ such that, for
any edge $e\in E$, at least one of its endpoints are contained in $C$.
We denote the set of all minimum vertex covers of $G$ by $\vc(G)$.
\textsc{Vertex Cover} is a problem to find a minimum vertex cover of a given graph.

\section{Algorithm Overview}\label{sec:overview}
The overview of our algorithm is described in Algorithm~\ref{alg:overview}.
For ease of presentation, the described algorithm only addresses the size of the minimum vertex cover.
However, obtaining the minimum vertex cover itself is not difficult.
Indeed, in our experiments, the minimum vertex cover is also computed,
and the time consumption to accomplish this is also accumulated.
The \emph{packing constraints} in the algorithm are created by our new branching and reduction rules.
They are not used to strengthen the LP relaxation, as in the branch-and-cut methods, but are used for the pruning and the
reduction.
We describe the details in Sections~\ref{sec:branching:packing} and \ref{sec:reduction:packing}.
We start the algorithm by setting $\mathcal{P}=\emptyset$, $c=0$, and $k=|V|$.
On each branching node, we first apply a list of reduction rules.
Then, we prune the search if the packing constraints are not satisfied or if the lower bound is
at least the size of the best solution we have.
If the graph is empty, we update the best solution.
If the graph is not connected, we separately solve each connected component.
Otherwise, we branch into two cases by applying the branching rule.
In our implementation, for time and space efficiency, we do not create new graphs after the branching but dynamically modify a
single graph.
As we note in the introduction, the algorithm can be proved to run in $O^*(1.2210^n)$.

{
\begin{algorithm}[tb]
\caption{The branch-and-reduce algorithm for \textsc{Vertex Cover}}
\label{alg:overview}
\begin{algorithmic}[1]
\INPUT a graph $G$, packing constraints $\mathcal{P}$, a current solution size $c$, and an upper
bound $k$
\Procedure{Solve}{$G,\mathcal{P},C,k$}
\State $(G,\mathcal{P},c)\leftarrow\Call{Reduce}{G,\mathcal{P},c}$
\If{$\Call{Unsatisfied}{\mathcal{P}}$}
	\Return $k$
\EndIf
\If{$c+\Call{LowerBound}{G}\geq k$}
	\Return $k$
\EndIf
\If{$G$ is empty}
	\Return $c$
\EndIf
\If{$G$ is not connected}
	\State $k'\leftarrow c$
	\ForAll{$(G_i,\mathcal{P}_i)\in\Call{Components}{G,\mathcal{P}}$}
		\State $k'\leftarrow k'+\Call{Solve}{G_i,\mathcal{P}_i,0,k-k'}$
	\EndFor
	\State \Return $k'$
\EndIf
\State $((G_1,\mathcal{P}_1,c_1),(G_2,\mathcal{P}_2,c_2))\leftarrow\Call{Branch}{G,\mathcal{P},c}$
\State $k\leftarrow\Call{Solve}{G_1,\mathcal{P}_1,c_1,k}$
\State $k\leftarrow\Call{Solve}{G_2,\mathcal{P}_2,c_2,k}$
\State \Return $k$
\EndProcedure
\end{algorithmic}
\end{algorithm}
}

\section{Branching Rules}\label{sec:branching}

\subsection{Vertex Selection}
In our main implementation, we completely use the same strategy as the one used in the theoretical exact exponential
algorithm by Fomin~et~al.~\cite{vc_reduction/measure_and_conquer} for selecting a vertex to branch on.
Basically, a vertex of the maximum degree is selected.
If there are multiple possibilities, we choose the vertex $v$ that minimizes the number of edges among $N(v)$.
In our experiments (Section~\ref{sec:experiments:lower_bounds}), we compare this strategy to the random selection strategy and the
minimum degree selection strategy.

\subsection{Mirror Branching}\label{sec:branching:mirror}
For a vertex $v$, a vertex $u\in N^2(v)$ is called a \emph{mirror} of $v$ if $N(v)\setminus N(u)$ induces a clique or
is an empty set.
We denote the set of mirrors for $v$ by $\mathcal{M}(v)$ and use the notation of $\mathcal{M}[v]=\mathcal{M}(v)\cup\{v\}$.
For the \emph{mirror branching} rule by Fomin~et~al.~\cite{vc_reduction/measure_and_conquer},
we branch into two cases: 1) including $\mathcal{M}[v]$ to the vertex cover or 2) discarding $v$ while including $N(v)$ to the
vertex cover.
In our implementation, we use this branching rule when the selected vertex $v$ has mirrors.

\subsection{Satellite Branching}
For a vertex $v$, a vertex $u\in N^2(v)$ is called a \emph{satellite} of $v$ if there exists a vertex $w\in N(v)$ such
that $N(w)\setminus N[v]=\{u\}$.
We denote the set of satellites for $v$ by $\mathcal{S}(v)$ and use the notation of
$\mathcal{S}[v]=\mathcal{S}(v)\cup\{v\}$.
Kneis~et~al.~\cite{DBLP:conf/fsttcs/KneisLR09} introduced the following \emph{satellite branching} rule for the case
in which there are no mirrors: 1) including $v$ to the vertex cover or 2) discarding $\mathcal{S}(v)$ while including
$N(\mathcal{S}(v))$ to the vertex cover.
In our implementation, instead of using this branching rule, we use a more powerful branching rule introduced in the
next subsection.

\subsection{Packing Branching}\label{sec:branching:packing}
Let $v$ be the selected vertex to branch on.
The proof outline for the correctness of the satellite branching is as follows.
If there exists a minimum vertex cover of $G$ that contains the vertex $v$, we can find it by searching for a minimum vertex
cover of $G-v$.
Otherwise, we can assume that no minimum vertex covers contain the vertex $v$.
If there exists a minimum vertex cover $C$ of $G$ that does not contain the vertex $v$ but contains a satellite
$u\in\mathcal{S}(v)$, then by discarding the vertex $w\in N(v)$ that satisfies $N(w)\setminus N[v]=\{u\}$ from $C$ and
including $v$ to $C$, we obtain a vertex cover that contains the vertex $v$ of the same size, which is a contradiction.

The key idea of satellite branching is that during the search for a minimum vertex cover that does not contain
the vertex $v$, we can assume that there are no minimum vertex covers that contain the vertex $v$.
To avoid the search of vertex covers from which we can confirm the existence of a vertex cover of the same size containing $v$,
we exploit this idea by explicitly creating constraints as follows.

For a vertex $w\in N(v)$, let $N^+(w)=N(w)\setminus N[v]$.
During the search for a minimum vertex cover that does not contain $v$, if we include all the vertices of
$N^+(w)$ to the vertex cover, by discarding the vertex $w$ and including the vertex $v$, we can obtain a vertex cover of
the same size.
Thus, in the search for a minimum vertex cover that does not contain $v$, for each vertex $w\in
N(v)$, we can introduce a constraint of $\sum_{u\in N^+(w)}x_u\leq |N^+(w)|-1$, where $x_u$ is a variable that indicates
whether the vertex $u$ is in the vertex cover (1) or not (0).
We call these constraints \emph{packing constraints}.
We keep and manage the constraints during the search;
when we include a vertex $v$ to the vertex cover, for each constraint that contains the variable $x_v$, we delete
the variable and decrease the right-hand side by one,
and when we delete a vertex $v$ from the graph without including it to the vertex cover,
for each constraint that contains the variable $x_v$, we delete the variable while keeping its right-hand side.
When some constraint is not satisfied at some node, i.e., the right-hand side of the constraint becomes negative, we
prune the subsequent search from the node.
We note that without packing constraints, we can prune the search only when the graph becomes empty or when the lower
bound exceeds the best solution we have found so far.

We can also introduce a packing constraint when we search for a minimum vertex cover that contains a vertex $v$.
If all the neighbors of $v$ are contained in the vertex cover, by discarding $v$, we can obtain a vertex cover of
smaller size.
Thus, in the search for a minimum vertex cover that contains $v$, we can introduce a constraint of $\sum_{u\in
N(v)}x_u\leq |N(v)|-1$.

Moreover, we can also use packing constraints for reductions.
If the right-hand side of a constraint becomes zero but the left-hand side contains a variable $x_u$, we can delete the
vertex $u$ from the graph while including its neighbors $N(u)$ to the vertex cover.
The satellite branching corresponds to the case that $N^+(w)$ is a single vertex set.
In Section~\ref{sec:reduction:packing}, we introduce more sophisticated reduction rules to exploit packing constraints.

We note that the total size of packing constraints scales at most linearly with the graph size because we create at most one
constraint for each vertex $w$ and the size of each constraint is at most the degree of the corresponding vertex.
Thus explicitly keeping all the constraints does not seriously affect the computation time.
We also note that packing constraints are auxiliary;
i.e., our objective is not to search for a minimum vertex cover under the constraints
but to search for a minimum vertex cover or conclude that there exists a minimum
vertex cover not satisfying the constraints (which can be found in another case of the branching).

\section{Reduction Rules}
\label{sec:reduction}

\subsection{Reductions from Exponential Algorithms}\label{sec:reduction:exponential}

First, we introduce four reduction rules from the exact exponential algorithm by
Fomin~et~al.~\cite{vc_reduction/measure_and_conquer}.
Three of them are quite simple.
The first one is the \emph{components} rule.
When a graph is not connected, we can solve for each component separately.
The second one is the \emph{degree-1} rule.
If a graph contains a vertex of degree at most one, there always exists a
minimum vertex cover that does not contain the vertex.
Therefore, we can delete it and include its neighbors to the vertex cover.
The third one is the \emph{dominance} rule.
We say a vertex $v$ \emph{dominates} a vertex $u$ if $N[u]\subseteq N[v]$.
If a vertex $v$ dominates some vertex, there always exists a minimum vertex cover that contains $v$.
Therefore, we can include it to the vertex cover.
We note that the degree-1 rule is completely
contained in the components rule and the
dominance rule.
However, it is still useful because its computational cost is smaller in practice.
The final rule, \textit{degree-2 folding}, is somewhat tricky.
It removes a vertex of degree two and its neighbors while introducing a new
vertex as in the following lemma.
\begin{lemma}[Degree-2
Folding~\cite{vc_reduction/measure_and_conquer}]
Let $v$ be a vertex of degree two whose two neighbors are not adjacent, and let $G'$
be a graph obtained from $G$ by removing $N[v]$, introducing a new vertex $w$ which is connected to $N^2(v)$.
Then, for any $C'\in\vc(G')$, the following $C$ is in $\vc(G)$:
\begin{equation*}
C=\begin{cases}
C'\cup\{v\}&(w\not\in C'),\\
(C'\setminus\{w\})\cup N(v)&(w\in C').
\end{cases}
\end{equation*}
\end{lemma}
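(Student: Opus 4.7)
The plan is to verify that the prescribed $C$ is a vertex cover of $G$ and that it has the minimum possible size, by establishing the identity $\opt(G) = \opt(G') + 1$. Throughout, write $N(v) = \{a, b\}$ with $ab \notin E$, and note that $N^2(v)$ is exactly the set of neighbors of $w$ in $G'$.

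First I would show that $C$ covers every edge of $G$ and that $|C| = |C'| + 1$, split by the two cases. If $w \notin C'$, then every edge of $G'$ incident to $w$ must be covered from the other side, so $N^2(v) \subseteq C'$; then in $C = C' \cup \{v\}$, edges disjoint from $N[v]$ are handled by $C'$, the edges $va$ and $vb$ by $v$, and edges from $\{a,b\}$ to $N^2(v)$ by $C' \supseteq N^2(v)$. If $w \in C'$, then $C = (C' \setminus \{w\}) \cup \{a, b\}$: edges disjoint from $N[v]$ are handled by $C' \setminus \{w\}$ (since in $G'$ they are also disjoint from $w$), and every edge touching $\{v, a, b\}$ is handled by $a$ or $b$. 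In both cases $|C| - |C'| = 1$.

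Second I would prove $\opt(G) \geq \opt(G') + 1$ by turning an arbitrary minimum vertex cover $D$ of $G$ into a vertex cover $D'$ of $G'$ with $|D'| = |D| - 1$. Minimality of $D$ forces exactly one of two situations: either $\{a,b\} \subseteq D$ and $v \notin D$ (otherwise $v$ would be removable), or $v \in D$ and at least one of $a,b$ is outside $D$. For $\{a,b\} \subseteq D$, set $D' = (D \setminus \{a,b\}) \cup \{w\}$; for $v \in D$ with $a,b \notin D$, set $D' = D \setminus \{v\}$, noting that every $u \in N^2(v)$ must already lie in $D$ because the $G$-edge from $u$ to whichever of $a,b$ it neighbors needs covering; for $v \in D$ with exactly one of $a, b$ in $D$, say $a \in D$, set $D' = (D \setminus \{v, a\}) \cup \{w\}$. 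In each subcase a direct check shows $D'$ covers $G'$ and the size drops by exactly one. Combining with the first step gives $|C| = |C'| + 1 = \opt(G') + 1 \leq \opt(G) \leq |C|$, so $C \in \vc(G)$.

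The main obstacle is the careful case split in the second step: one has to use minimality of $D$ to rule out $\{v, a, b\} \subseteq D$ (which would give the wrong size drop) and to justify that the edges newly involving $w$ in $G'$ are already covered on the far side when $a$ or $b$ is removed from $D$. Everything else is essentially mechanical verification of edge coverage across the vertex identification $\{a,b\} \leftrightarrow w$.
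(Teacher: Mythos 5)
Your proof is correct. Note that the paper itself gives no proof of this lemma --- it is stated as a citation to Fomin, Grandoni, and Kratsch --- so the comparison is against the standard folding argument from the literature, and yours is exactly that argument: verify that the prescribed $C$ covers $G$ with $|C|=|C'|+1$, then establish $\opt(G)\geq\opt(G')+1$ by converting a minimum cover $D$ of $G$ into a cover of $G'$ of size $|D|-1$, so that $|C|=\opt(G')+1\leq\opt(G)\leq|C|$ forces equality. Your case analysis in the second step is the part most often botched, and you handle it properly: minimality of $D$ correctly rules out $\{v,a,b\}\subseteq D$, the subcase $v\in D$, $a,b\notin D$ correctly uses that every vertex of $N^2(v)$ must lie in $D$ to cover its edge to $a$ or $b$ (this is the only subcase where $w\notin D'$, hence the only one needing far-side coverage), and the mixed subcase $v,a\in D$, $b\notin D$ is correctly sent to $(D\setminus\{v,a\})\cup\{w\}$ rather than $D\setminus\{v\}$, which would fail to cover the edges from $w$ to the $G'$-neighbors of $a$.
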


\subsection{Reductions from FPT Algorithms}\label{sec:reduction:lp}
We use the \emph{LP-based reduction} rule which was first developed by Nemhauser
and Trotter~\cite{NemhauserT75}, and then, strengthened by Iwata, Oka, and
Yoshida~\cite{bip2/iwata14}.

The LP relaxation of \textsc{Vertex Cover} can be written as follows:
\begin{align*}
  &\text{minimize}\displaystyle \sum_{v\in V}x_v\\
  &\begin{array}{lll}
    \text{s.t.}&x_u+x_v\geq 1 & \text{for }uv\in E,\\
    & x_v \geq 0 & \text{for } v \in V. \\
  \end{array}
\end{align*}
Nemhauser and Trotter~\cite{NemhauserT75} showed that the above LP has the following two properties:
\begin{itemize}
  \item There exists an optimal solution such that each variable takes a value $0$, $1$, or
  $\frac{1}{2}$ (\emph{Half-integrality}).
  \item If a variable $x_v$ takes an integer value in an optimal LP solution, there always exists an optimal integer
  solution in which $x_v$ takes the same value (\emph{Persistency}).
\end{itemize}
They also showed that a half-integral optimal solution of the LP relaxation can be computed by reducing it to the bipartite
matching problem as follows.
From the input graph $G=(V,E)$, we construct a bipartite graph $\bar{G}=(L_V\cup R_V, \bar{E})$ such that:
\begin{align*}
L_V&=\{l_v\mid v\in V\},\\
R_V&=\{r_v\mid v\in V\},\\
\bar{E}&=\{l_ur_v\mid uv\in E\}\cup\{l_vr_u\mid uv\in E\}.
\end{align*}
Let $\bar{C}$ be a minimum vertex cover of the bipartite graph $\bar{G}$,
which can be computed in linear-time from
a maximum matching of $\bar{G}$.
Then, the value of $x^*_v$ of the half-integral optimal solution is determined as follows:
\begin{align*}
x^*_v=\begin{cases}
0&(l_v,r_v\not\in\bar{C}),\\
1&(l_v,r_v\in\bar{C}),\\
\frac{1}{2}&(\text{otherwise}).
\end{cases}
\end{align*}

From the persistency of the LP relaxation, we can fix the integral part of a half-integral optimal solution;
i.e., we can discard the vertices of value zero and include the vertices of value one to the vertex cover.

Iwata, Oka, and Yoshida~\cite{bip2/iwata14} presented an algorithm for computing a half-integral
optimal solution of the LP relaxation whose half-integral part is minimal.
We call such an optimal solution an \emph{extreme} optimal solution.
The algorithm runs in linear time after computing a maximum matching of the graph described above.
Instead of using an arbitrary half-integral optimal solution, we use a half-integral extreme optimal solution
computed by this algorithm.
If $x^*$ is a half-integral extreme optimal solution, the graph induced on the half-integral part has a unique optimal
solution of the all-half vector.
Thus no more vertices can be deleted by using other optimal solutions.
We note that the famous \emph{crown reduction} rule [Abu-Khzam~et~al., 2004] is
completely contained in the LP reduction rule if we use an extreme optimal solution.
Thus, we do not use it in our algorithm.

In our implementation, we compute the maximum matching by using the Hopcroft-Karp algorithm, which runs in
$O(|E|\sqrt{|V|})$ time.
When the graph is changed by reductions or branchings, we do not recompute the maximum matching from scratch but
modify the current non-maximum matching to the maximum one by searching for augmenting paths of the residual graph.

\subsection{Reductions from Exponential Algorithms for Sparse Graphs}\label{sec:reduction:sparse}
Now, we introduce the four reduction rules that appeared in the exact exponential algorithm for sparse graphs by
Xiao and Nagamochi~\cite{vc_reduction/nagamochi2013}.
These rules are very complicated, but as we see in Section~\ref{sec:experiments}, they are quite
useful in practice.

The first rule, \emph{unconfined}, is a generalization of the dominance
and the satellite rules by Kneis~et~al.~\cite{DBLP:conf/fsttcs/KneisLR09}.
A vertex $v$ is called unconfined if the following procedure returns yes:
\begin{enumerate}
  \item Let $S=\{v\}$.
  \item Find $u\in N(S)$ such that $|N(u)\cap S|=1$ and $|N(u)\setminus N[S]|$ is minimized.
  \item If there is no such vertex, return no.
  \item If $N(u)\setminus N[S]=\emptyset$, return yes.
  \item If $N(u)\setminus N[S]$ is a single vertex $w$, go back to line 2 by
  adding $w$ to $S$.
  \item Return no.
\end{enumerate}
For any unconfined vertex $v$, there always exists a minimum vertex cover that contains $v$.
Thus, we can include it to the vertex cover.

The second rule, \emph{twin}, is similar to the degree-2 folding rule.
Two vertices $u$ and $v$ are called a twin if $N(u)=N(v)$ and $d(u)=d(v)=3$.
If there is a twin, we can make the graph smaller, as in the following lemma.
\begin{lemma}[Twin~\cite{vc_reduction/nagamochi2013}]
Let $u$ and $v$ be a twin.
If there exists an edge among $N(u)$, for any $C'\in\vc(G-N[\{u,v\}])$, $C'\cup N(u)\in\vc(G)$.
Otherwise, let $G'$ be a graph obtained from $G$ by removing $N[\{u,v\}]$, introducing a new vertex $w$ connected to $N^2(u)\setminus\{v\}$.
Then, for any $C'\in\vc(G')$, the following $C$ is in $\vc(G)$:
\begin{equation*}
C=\begin{cases}
C'\cup\{u,v\}&(w\not\in C'),\\
(C'\setminus\{w\})\cup N(u)&(w\in C').
\end{cases}
\end{equation*}
\end{lemma}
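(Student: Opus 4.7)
The plan is to treat the two cases of the lemma separately, and in each case to prove a tight size relation between $\opt(G)$ and the reduced instance, after which verifying the lift formulas is a routine check.

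For the first case, when $N(u)$ contains an edge, I would aim to show $\opt(G)=\opt(G-N[\{u,v\}])+3$. Write $N(u)=\{a,b,c\}$. The easy direction is that $C'\cup N(u)$ covers every edge of $G$: the three vertices in $N(u)$ cover every edge incident to $\{u,v,a,b,c\}$, and $C'$ covers the rest. For the reverse direction I would use the observation that in any vertex cover $C$ of $G$, the induced $K_{2,3}$ on $\{u,v,a,b,c\}$ forces $\{u,v\}\subseteq C$ or $\{a,b,c\}\subseteq C$. Starting from an arbitrary minimum $C$, I would exhibit a swap—if $\{u,v\}\subseteq C$ and, say, $ab\in E$, replace $\{u,v\}$ by the missing members of $\{a,b,c\}$ and invoke the edge $ab$ to argue that the new set is no larger—producing a minimum $C^*$ with $N(u)\subseteq C^*$ and $u,v\notin C^*$. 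Restricting to $V\setminus N[\{u,v\}]$ then gives a vertex cover of size $|C|-3$.

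For the second case, when $N(u)$ is independent, I would prove $\opt(G)=\opt(G')+2$. The lift direction is a routine check case-split on $w$: if $w\notin C'$ then all of $N^2(u)\setminus\{v\}$ must lie in $C'$ to cover the $w$-edges, so adding $u,v$ covers the star and the outside edges are already handled; if $w\in C'$, swapping $w$ for $\{a,b,c\}$ covers the star (since $N(u)=N(v)=\{a,b,c\}$) and every edge from $\{a,b,c\}$ to $N^2(u)\setminus\{v\}$, while the $G'$-edges outside are covered by $C'\setminus\{w\}$. Both constructions yield a vertex cover of $G$ of size $|C'|+2$.

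The main obstacle is the reverse inequality $\opt(G')\leq\opt(G)-2$. Here I would take any minimum $C\in\vc(G)$ and, using the $K_{2,3}$ observation together with minimality (which lets me assume no redundant vertex of $C$ remains), reduce to exactly one of two configurations: (i) $u,v\notin C$ and $\{a,b,c\}\subseteq C$, or (ii) $\{u,v\}\subseteq C$ with $|\{a,b,c\}\cap C|\in\{0,1\}$. In each I would construct a vertex cover of $G'$ of size exactly $|C|-2$. The delicate point is ruling out subcases that would yield a smaller cover of $G'$ (and thus contradict the lift bound): in (i), if every vertex of $N^2(u)\setminus\{v\}$ already lay in $C$, then $\{u,v\}\cup(C\setminus N(u))$ would be a strictly smaller vertex cover of $G$, so some such vertex is missing and $w$ must be added to compensate; in (ii) with $|\{a,b,c\}\cap C|=2$, swapping $\{u,v\}$ together with the two included neighbors for all of $\{a,b,c\}$ again contradicts minimality, eliminating that subcase. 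This minimality-driven bookkeeping is the technical core; once it is in place, both lift formulas are recognized as minimum vertex covers of $G$ because their size matches $\opt(G')+2=\opt(G)$.
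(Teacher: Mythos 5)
Your proof is correct, but note that the paper itself contains no proof of this lemma: it is stated as an imported result, cited to Xiao and Nagamochi, so there is no in-paper argument to compare against. Your proposal is therefore a valid self-contained reconstruction. The skeleton is sound: in both cases you establish the tight relation ($\opt(G)=\opt(G-N[\{u,v\}])+3$ when $N(u)$ contains an edge, $\opt(G)=\opt(G')+2$ when $N(u)$ is independent), using the $K_{2,3}$ on $\{u,v\}\cup N(u)$ to force $\{u,v\}\subseteq C$ or $N(u)\subseteq C$, swap arguments to normalize a minimum cover, and the lift checks to close the loop; this yields that each lifted cover has size exactly $\opt(G)$ and hence is minimum. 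One remark: the "delicate point" you flag in the second case --- ruling out configurations of a minimum cover $C$ of $G$ that would produce a cover of $G'$ of size $|C|-3$ --- is logically unnecessary. To prove $\opt(G')\leq\opt(G)-2$ you only need \emph{some} cover of $G'$ of size at most $|C|-2$; if your construction happened to give something even smaller, that would not hurt the inequality (indeed, combined with the lift direction $\opt(G)\leq\opt(G')+2$ it is automatically impossible, which is exactly the contradiction you invoke). So your extra minimality-driven bookkeeping is harmless and makes the bounds visibly consistent, but the proof closes without it: the two inequalities alone give $\opt(G)=\opt(G')+2$, and the routine verification that both lift formulas produce vertex covers of size $|C'|+2$ finishes the argument.
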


Now, we introduce the notion of \textit{alternative}.
Two subsets of vertices $A$ and $B$ are called alternatives if $|A|=|B|\geq 1$ and there exists a minimum vertex cover
$C$ that satisfies $C\cap (A\cup B)=A\text{ or }B$.
The third and fourth rules are special cases of the alternative.
Let $u, v$ be adjacent vertices such that $N(v)\setminus\{u\}$ induces a complete graph.
Then, $\{u\}$ and $\{v\}$ are alternative sets (called a \emph{funnel}).
Let $a_1b_1a_2b_2$ be a chordless 4-cycle such that the degree of each vertex is at least three.
Let $A=\{a_1,a_2\}$ and $B=\{b_1,b_2\}$.
If it holds that $N(A)\cap N(B)=\emptyset$, $|N(A)\setminus B|\leq 2$, and $|N(B)\setminus A|\leq 2$,
then $A$ and $B$ are alternatives (called a \emph{desk}).
If there is a funnel or a desk, we can remove it by the following lemma.
\begin{lemma}[Alternative~\cite{vc_reduction/nagamochi2013}]
Let $A, B$ be alternative subsets of vertices, and $G'$ be a graph obtained from $G$ by removing
$(N(A)\cap N(B))\cup A\cup B$ and introducing an edge between every two
nonadjacent vertices $u\in N(A)\setminus N[B]$ and $v\in N(B)\setminus N[A]$.
Then, for any $C'\in\vc(G')$, the following $C$ is in $\vc(G)$:
\begin{equation*}
C=\begin{cases}
C'\cup(N(A)\cap N(B))\cup A&(N(B)\setminus N[A]\subseteq C'),\\
C'\cup(N(A)\cap N(B))\cup B&(N(A)\setminus N[B]\subseteq C').
\end{cases}
\end{equation*}
\end{lemma}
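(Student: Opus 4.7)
My plan has three steps: a structural observation that justifies the case split, a direct check that $C$ is a vertex cover of $G$, and a counting argument for minimality. First, I observe that in $G'$ every pair $u\in N(A)\setminus N[B]$ and $v\in N(B)\setminus N[A]$ is adjacent, because both vertices survive in $V(G')$ and either $uv\in E$ already holds in $G$ (so the edge persists in $G'$) or $uv\notin E$ (so the construction explicitly inserts it). Thus $\bigl(N(A)\setminus N[B],\,N(B)\setminus N[A]\bigr)$ induces a complete bipartite subgraph in $G'$, and since $C'$ covers $G'$ it must contain one entire side, so at least one of the two case hypotheses in the lemma always holds.

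Next, I verify that $C$ covers every edge of $G$. From the alternative property, both $A$ and $B$ must be independent sets, since an internal edge within either set could not be covered by a minimum vertex cover disjoint from that set. In case~1, where $C=C'\cup(N(A)\cap N(B))\cup A$, I classify each edge of $G$ by which sets its endpoints lie in: edges incident to $A$ or to $N(A)\cap N(B)$ are directly covered by $C$; edges incident to $B$ reach either $A$, $N(A)\cap N(B)$, or $N(B)\setminus N[A]\subseteq C'$, each of which lies in $C$; and every remaining edge has both endpoints in $V(G')$, so it persists in $G'$ and is covered by $C'$. Case~2 is handled symmetrically.

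Finally, for minimality, I fix a minimum vertex cover $C^{*}$ of $G$ with $C^{*}\cap(A\cup B)=A$ from the alternative property (the $B$-case is symmetric and, because $|A|=|B|$, yields the same bound). Since $B\cap C^{*}=\emptyset$ and $C^{*}$ covers every $B$-incident edge, $N(B)\subseteq C^{*}$; in particular $N(A)\cap N(B)\subseteq C^{*}$. Setting $D:=C^{*}\setminus\bigl(A\cup(N(A)\cap N(B))\bigr)$, I check that $D$ is a vertex cover of $G'$: any edge of $G$ that survives in $G'$ is covered by $C^{*}$ with the covering endpoint lying in $V(G')$ and hence in $D$; any new edge $uv$ with $v\in N(B)\setminus N[A]$ is covered because $v\in N(B)\subseteq C^{*}$ and $v\in V(G')$ together put $v\in D$. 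Therefore $|C'|\le|D|=|C^{*}|-|A|-|N(A)\cap N(B)|$, so in either case of the lemma $|C|=|C'|+|N(A)\cap N(B)|+|A|\le|C^{*}|$; combined with the previous step this gives $C\in\vc(G)$.

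The step I expect to be the main obstacle is showing that the restriction of $C^{*}$ to $V(G')$ still covers the \emph{new} edges introduced by the construction. The resolution is the clean observation $N(B)\setminus N[A]\subseteq N(B)\subseteq C^{*}$, which chains the alternative property with the cover condition on $B$ and is precisely the structural feature that makes the reduction sound.
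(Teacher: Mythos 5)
First, a point of reference: the paper does not prove this lemma at all --- it is imported verbatim from Xiao and Nagamochi --- so your argument stands or falls on its own. Most of it stands. The opening observation (that $N(A)\setminus N[B]$ and $N(B)\setminus N[A]$ are disjoint, survive in $G'$, and span a complete bipartite subgraph of $G'$, so any vertex cover of $G'$ must contain one side entirely) is correct, and it is exactly what makes the lemma's case split exhaustive. The covering check is correctly organized, and the minimality step is right: with a witness cover $C^*$ satisfying $C^*\cap(A\cup B)=A$ you get $N(B)\subseteq C^*$, hence $D:=C^*\setminus(A\cup(N(A)\cap N(B)))$ has size $|C^*|-|A|-|N(A)\cap N(B)|$, covers all surviving edges of $G$ as well as all newly introduced edges (via $N(B)\setminus N[A]\subseteq N(B)\subseteq C^*$), and so bounds $|C'|$ from above; since $|A|=|B|$, the resulting inequality $|C|\le|C^*|$ holds in both cases of the lemma.

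The genuine gap is the sentence asserting that ``both $A$ and $B$ must be independent sets'' as a consequence of the alternative property. The paper's definition supplies a single minimum vertex cover $C^*$ with $C^*\cap(A\cup B)$ equal to $A$ \emph{or} to $B$, and that witness forces independence only of the side it avoids. Your covering argument for case 1 needs $B$ independent, and for case 2 needs $A$ independent, so at least one of the two cases is left unsupported. This is not cosmetic: under the paper's literal definition the lemma is false. Take $A=\set{a_1,a_2}$ and a vertex $u$ forming a triangle, and $B=\set{b_1,b_2}$ with both $b_i$ adjacent only to a vertex $v$. Then $\set{a_1,a_2,v}$ is a minimum vertex cover meeting $A\cup B$ exactly in $A$, so $A,B$ are alternatives as defined here; $G'$ is the single edge $uv$; and $C'=\set{u}$ satisfies the hypothesis of case 2, yet $C'\cup B=\set{u,b_1,b_2}$ fails to cover the edge $a_1a_2$. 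The repair is to take independence (and disjointness) of $A$ and $B$ as part of the definition of alternatives --- which is how Xiao and Nagamochi define them, and which holds for the funnel and desk instances used in the paper, since chordlessness and the clique condition make the relevant sets independent --- rather than to derive it from the witness cover. With that hypothesis made explicit, your proof is complete.
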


\subsection{Packing Reductions}\label{sec:reduction:packing}
In Section~\ref{sec:branching:packing}, we introduce the branching rule that creates auxiliary constraints, called
packing constraints, and introduce the simple reduction rule on the basis of these constraints.
In this section, we introduce more sophisticated reduction rules to exploit packing constraints.
Let $\sum_{v\in S}x_v\leq k$ be a packing constraint such that $S$ is nonempty.

The first rule is for the case in which $k$ is zero.
To satisfy the constraint, we cannot include any vertices in $S$ to the vertex cover.
Thus, if there is an edge among $S$, we can prune the subsequent search.
Otherwise, we can delete $S$ from the graph while including $N(S)$ to the vertex cover.
Here, we can introduce additional packing constraints.
Let $u$ be a vertex such that $N(u)\cap S$ is a single vertex $w$, and let $N^+(u)=N(u)\setminus N[S]$.
If we include all the vertices of $N^+(u)$ to the vertex cover, by discarding the vertex $u$ and including
$w$, we can obtain a vertex cover of the same size that does not satisfy the constraint of $\sum_{v\in S}x_v\leq 0$.
Thus, we can introduce a new constraint of $\sum_{v\in N^+(u)}x_v\leq |N^+(u)|-1$.

The second rule is for the case in which $k$ is positive.
Let $u\not\in S$ be a vertex such that $|S\cap N(u)|>k$.
If we do not include $u$ to the vertex cover, all the vertices of $N(u)$ must be contained in the vertex cover.
Thus, the constraint is not satisfied.
Therefore, we can include $u$ to the vertex cover.
Moreover, if at least $|N(u)|-1$ vertices of $N(u)$ are included to the vertex cover, by discarding $u$ and including
the remaining vertex of $N(u)$, we can obtain a vertex cover of the same size that does not satisfy the constraint.
Thus, we can introduce a new constraint of $\sum_{v\in N(u)}x_v\leq |N(u)|-2$.

When we also use reduction rules such as the degree-2 folding, which modifies the graph by deleting some vertices
and creating new vertices, the deleted vertices might be included to the vertex cover later on.
In that case, we revert the modification until all the vertices in the constraint are recovered and then check the
constraint.

\section{Lower Bounds}\label{sec:lower_bounds}

We introduce several lower bounds that can be easily computed.
In our main implementation, we take the maximum of them as a lower bound.

\subsection{Clique Cover}\label{sec:lower_bounds:clique}
A set of disjoint cliques $C_1,\ldots,C_k$ is called \emph{clique cover} if it covers all the vertices.
For a clique cover $C_1,\ldots,C_k$, the value $\sum_{i=1}^k(|C_i|-1)=|V|-k$ gives a lower bound for
the size of the minimum vertex cover.

In our implementation, we compute a clique cover greedily as follows.
First, we sort the vertices by ascending order of their degrees and initiate a set of cliques $\mathcal{C}$ to be an
empty set.
Then, for each vertex $v$, we search for a clique $C\in\mathcal{C}$ to which $v$ can be added.
If there are multiple possible cliques, we choose the one with maximum size.
If there are no such cliques, we add a clique of the single vertex $v$ to $\mathcal{C}$.
Since it takes only $O(d(v))$ time for each vertex $v$, the algorithm runs in linear time in total.

This lower bound is also used in the state-of-the-art branch-and-bound algorithm \emph{MCS}~\cite{clique/mcs_walcom10}.
MCS computes a clique cover using a more sophisticated strategy to obtain a better lower bound.
However, it does not scale for large graphs.

\subsection{LP Relaxation}\label{sec:lower_bounds:lp}
The optimal value of the LP relaxation gives a lower bound for the size of the minimum vertex cover.
After the LP-based reduction, the remaining graph admits a half-integral optimal solution of value $\frac{|V|}{2}$.
This lower bound has been used in FPT algorithms parameterized by the difference between LP lower bounds and the IP
optimum~\cite{DBLP:journals/corr/abs-1203-0833,bip2/iwata14}.

\subsection{Cycle Cover}\label{sec:lower_bounds:cycle}
A set of disjoint cycles $C_1,\ldots,C_k$ is called \emph{cycle cover} if it covers all the vertices.
Here, two adjacent vertices are considered as a cycle of length two, but a single vertex does not form a cycle of
length one.
For a cycle cover $C_1,\ldots,C_k$, the value $\sum_{i=1}^k\left\lceil\frac{|C_i|}{2}\right\rceil$ gives a
lower bound for the size of the minimum vertex cover.

We do not have to compute a cycle cover from scratch.
After the LP-based reduction, the bipartite graph $\bar{G}$ of Section~\ref{sec:reduction:lp} admits a perfect
matching.
Thus, by taking an edge $uv$ for each edge $l_u r_v$ in the perfect matching, we can obtain a cycle cover of the graph
$G$ in $O(|V|)$ time.
Since the optimal value of the LP relaxation is $\frac{|V|}{2}=\sum_{i=1}^{k}\frac{|C_i|}{2}$, the lower bound given by
this cycle cover is never worse than the LP optimum.
Let $v_1,...,v_n$ be vertices forming a cycle.
If there are four vertices $\{v_i, v_{i+1}, v_j, v_{j+1}\}$ with edges $v_iv_{j+1}$ and $v_jv_{i+1}$, we can split the cycle into two smaller cycles.
In our implementation, if it is possible to split a cycle of even length into two smaller cycles
of odd length, we split it to improve the lower bound.

\section{Experiments}\label{sec:experiments}
\subsection{Setup}
Experiments were conducted on a machine with Intel
Xeon X5670 (2.93 GHz) and 48GB of main memory
running Linux 2.6.18.
C++ programs were compiled using gcc 4.8.2 with \texttt{-O3} option.
Java programs were executed with JRE 1.8.0.
All the timing results were sequential.
We set the time limit for each execution as 24 hours.

\subsubsection{Instances}
As for problem instances, we focused on real large sparse networks.
Computing small vertex covers on these networks
is important for graph indexing methods~\cite{app/vc_index,app/k-path}.
We also used instances from DIMACS Implementation Challenge and the \OCT problem.
Directions of edges are ignored and self-loops were removed beforehand.
The detailed description of the three sets of graphs are as follows.

\myparagraph{Real Sparse Networks} We focused on real large sparse networks
such as social networks, web graphs, computer networks and road networks.
They were obtained from the Stanford Large Network Dataset Collection\footnote{\url{http://snap.stanford.edu/data/}}, Koblenz Network Collection\footnote{\url{http://konect.uni-koblenz.de/}}, and Laboratory for Web Algorithmics\footnote{\url{http://law.di.unimi.it/datasets.php}}~\cite{datasets/webgraph1,datasets/webgraph2}.

\myparagraph{DIMACS Instances}
DIMACS Instances are those from DIMACS Implementation Challenge on the maximum clique problem~\cite{datasets/dimacs}.
They consist of artificial synthetic graphs and problems reduced from other problems.
We used complement graphs of them as \textsc{Vertex Cover} instances.
Since they are originally dense graphs and have at most thousands of vertices,
explicitly considering complement graphs is feasible
and has been often done for benchmarking algorithms for
\textsc{Vertex Cover} and \textsc{Minimum Independent Set}.
Indeed, these complement graphs are also available online for these problems,
and we downloaded them\footnote{\url{http://www.cs.hbg.psu.edu/txn131/vertex_cover.html}}.

\myparagraph{Instances from \OCT}
An \emph{odd cycle transversal} of a graph $G$ is a vertex subset $S\subseteq V$ such that $G-S$ becomes a bipartite
graph.
\textsc{Odd Cycle Transversal} is a problem in which a minimum odd cycle transversal of a given graph has to be found.
The problem is known to be reduced to \textsc{Vertex Cover} in linear time as
follows~\cite{DBLP:journals/corr/abs-1203-0833}.
From the input graph $G=(V,E)$, we construct a graph $\hat{G}=(L_V\cup R_V,\hat{E})$ where
$L_V=\{l_v\mid v\in V\}$, $R_V=\{r_v\mid v\in V\}$, and
$\hat{E}=\{l_ul_v\mid uv\in E\}\cup\{r_ur_v\mid uv\in E\}\cup\{l_v r_v\mid v\in V\}$
Let $\hat{C}$ be a minimum vertex cover of $\hat{G}$.
Then, a minimum odd cycle transversal can be computed by taking vertices $v\in V$ such that both $l_v$ and
$r_v$ are in $\hat{C}$.

We used real \textsc{Odd Cycle Transversal} instances from bioinformatics,
which formulates the \textsc{Minimum Site Removal} problem\footnote{\url{http://www.user.tu-berlin.de/hueffner/occ/}}~\cite{oct/huffner09}.

\subsubsection{Methods}
We generally compare the three algorithms for \textsc{Vertex Cover} based on different approaches:
B\&R, CPLEX and MCS~\cite{clique/mcs_walcom10}.
For instances from \OCT, we also include the results of the algorithm for directly solving \OCT
by H{\"u}ffner~\cite{oct/huffner09}.

\myparagraph{B\&R} \emph{B\&R} is the branch-and-reduce algorithm stated above, which is implemented in Java.
Unless mentioned otherwise,
all the branching rules (Section~\ref{sec:branching}),
all the reduction rules (Section~\ref{sec:reduction}),
and all the lower bounds (Section~\ref{sec:lower_bounds})
were used.

\myparagraph{CPLEX} \emph{IBM ILOG CPLEX Optimization Studio (CPLEX)} is a state-of-the-art commercial optimization software package.
We used version 12.6 and formulated \textsc{Vertex Cover} through integer programming.
To exactly obtain the minimum vertex cover,
we set \texttt{mip tolerances mipgap} and \texttt{mip tolerances absmipgap} as zero
and switched on \texttt{emphasis numerical}.
Nevertheless, CPLEX did not produce truly optimal solutions for some instances,
probably because of numerical precision issues\footnote{We also tested modification of the feasibility tolerance parameter in its simplex routine
(\texttt{simplex tolerances feasibility}).
Indeed, the results were quite sensitive to this parameter,
which implies numerical precision issues.
We observed that the best results were produced
by the default value $10^{-6}$ in almost all the cases,
and thus, the default value was used for this parameter.
Consequently, we switched on the numerical precision emphasis parameter (\texttt{emphasis numerical}).
While it improved the results to some extent,
in some instances the results were larger than ours,
though we confirmed that our smaller solutions were, indeed, vertex covers.}.
These results are presented in our tables in parentheses.

\myparagraph{MCS} \emph{MCS}~\cite{clique/mcs_walcom10} is a state-of-the-art branch-and-bound algorithm
for the \textsc{Maximum Clique} problem.
We used this algorithm for computing minimum vertex cover by virtually considering complement graphs.
The algorithm is tailored to DIMACS instances
and uses the \emph{greedy coloring technique} to obtain good lower bounds.
The algorithm never applies any reductions.
It was implemented by the authors in C++.

\myparagraph{H{\"u}ffner}
This is the state-of-the-art algorithm by H{\"u}ffner
for directly solving \OCT~\cite{oct/huffner09}.
This algorithm is based on an FPT algorithm by Reed, Smith and Vetta~\cite{oct/reed2004}
using the \emph{iterative compression technique}.

\subsection{Algorithm Comparison}
The experimental results on real sparse networks and DIMACS instances
are shown in Table~\ref{tbl:comparison_vc}.
For each instance,
the table lists the number of vertices ($\abs{V}$),
the number of edges ($\abs{E}$),
and results of the three methods.
For each method,
besides time consumption in seconds (T),
the number of branches (\#B) are described.
For CPLEX, the number of introduced cuts (\#C) is also denoted.

{
\tabcolsep = 1.5mm
\begin{table}[p]
\center
\caption{Performance comparison of algorithms for \textsc{Vertex Cover};
T denotes running time (in seconds), \#B denotes the number of branches, and \#C denotes the number of introduced cuts.}
\label{tbl:comparison_vc}
\fontsize{8pt}{0pt}\selectfont
\begin{tabular}{lrr|rr|rrr|rr}
\toprule
\multicolumn{3}{c|}{\textbf{Instance}} & \multicolumn{2}{c|}{\textbf{B\&R}}
 & \multicolumn{3}{c|}{\textbf{CPLEX}} &
\multicolumn{2}{c}{\textbf{MCS~\cite{clique/mcs_walcom10}}} \\
\multicolumn{1}{c}{\textbf{Name}} &
\multicolumn{1}{c}{\textbf{$\abs{V}$}} &
\multicolumn{1}{c|}{\textbf{$\abs{E}$}} &
\multicolumn{1}{c}{\textbf{T}} &
\multicolumn{1}{c|}{\textbf{\#B}} &
\multicolumn{1}{c}{\textbf{T}} &
\multicolumn{1}{c}{\textbf{\#B}} &
\multicolumn{1}{c|}{\textbf{\#C}} &
\multicolumn{1}{c}{\textbf{T}} &
\multicolumn{1}{c}{\textbf{\#B}} \\
\midrule
\multicolumn{4}{l}{\textbf{\textit{Social Networks:}}} \\
ca-GrQc & 5,242 & 14,484 & 0.1 & 0 & 0.1 & 0 & 0 & 61.5 & 4,351 \\
ca-HepTh & 9,877 & 25,973 & 0.2 & 0 & 0.2 & 0 & 0 & 2,181.3 & 33,183 \\
ca-CondMat & 23,133 & 93,439 & 0.1 & 0 & 0.8 & 0 & 0 & 36,540.5 & 107,891 \\
wiki-Vote & 7,115 & 100,762 & 0.0 & 0 & 0.3 & 0 & 0 & -- & -- \\
ca-HepPh & 12,008 & 118,489 & 0.1 & 0 & 0.7 & 0 & 0 & 1,803.7 & 26,231 \\
email-Enron & 36,692 & 183,831 & 0.6 & 0 & 1.4 & 0 & 0 & -- & -- \\
ca-AstroPh & 18,772 & 198,050 & 0.1 & 0 & 1.6 & 0 & 0 & -- & -- \\
email-EuAll & 265,214 & 364,481 & 0.1 & 0 & 2.0 & 0 & 0 & -- & -- \\
soc-Epinions1 & 75,879 & 405,740 & 1.1 & 0 & 1.5 & 0 & 0 & -- & -- \\
soc-Slashdot0811 & 77,360 & 469,180 & 0.2 & 0 & 1.9 & 0 & 0 & -- & -- \\
soc-Slashdot0902 & 82,168 & 504,230 & 0.2 & 0 & 2.6 & 0 & 0 & -- & -- \\
dblp-2010 & 300,647 & 807,700 & 2.0 & 0 & 13.0 & 0 & 0 & -- & -- \\
youtube-links & 1,138,499 & 2,990,443 & 1.0 & 0 & 14.8 & 0 & 0 & -- & -- \\
dblp-2011 & 933,258 & 3,353,618 & 4.8 & 0 & 89.7 & 0 & 0 & -- & -- \\
wiki-Talk & 2,394,385 & 4,659,565 & 1.3 & 0 & 64.6 & 0 & 0 & -- & -- \\
petster-cat & 149,700 & 5,448,197 & 3.0 & 0 & 1,713.4 & 0 & 3,042 & -- & -- \\
petster-dog & 426,820 & 8,543,549 & 5.8 & 3 & 1,702.6 & 0 & 9,522 & -- & -- \\
youtube-u-growth & 3,223,589 & 9,376,594 & 2.6 & 0 & 155.1 & 0 & 0 & -- & -- \\
flickr-links & 1,715,255 & 15,555,041 & 2.1 & 0 & 121.6 & 0 & 132 & -- & -- \\
petster-carnivore & 623,766 & 15,695,166 & 3.6 & 0 & 6,240.6 & 0 & 67,378 & -- & -- \\
libimseti & 220,970 & 17,233,144 & 1,642.8 & 472 & -- & -- & -- & -- & -- \\
soc-pokec & 1,632,803 & 22,301,964 & -- & -- & -- & -- & -- & -- & -- \\
flickr-growth & 2,302,925 & 22,838,276 & 2.8 & 0 & 239.6 & 0 & 323 & -- & -- \\
soc-LiveJournal1 & 4,847,571 & 42,851,237 & 11.5 & 36 & 23,876.5 & 0 & 114,581 & -- & -- \\
hollywood-2009 & 1,107,243 & 56,375,711 & 26.8 & 0 & -- & -- & -- & -- & -- \\
hollywood-2011 & 1,985,306 & 114,492,816 & 50.5 & 0 & -- & -- & -- & -- & -- \\
orkut-links & 3,072,441 & 117,185,083 & -- & -- & -- & -- & -- & -- & -- \\
\midrule \multicolumn{4}{l}{\textbf{\textit{Web Graphs:}}} \\
web-NotreDame & 325,729 & 1,090,108 & 13.4 & 4,266 & (181.2) & 478 & 539 & -- & -- \\
web-Stanford & 281,903 & 1,992,636 & 67,270.3 & 55,865,269 & (1,836.1) & 1,006 & 36,237 & -- & -- \\
baidu-related & 415,641 & 2,374,044 & 2.1 & 8 & 979.4 & 0 & 31,744 & -- & -- \\
cnr-2000 & 325,557 & 2,738,969 & -- & -- & 4,124.4 & 503 & 18,579 & -- & -- \\
web-Google & 875,713 & 4,322,051 & 1.4 & 10 & 113.3 & 0 & 332 & -- & -- \\
web-BerkStan & 685,230 & 6,649,470 & 142.3 & 42,270 & 6,777.4 & 970 & 66,277 & -- & -- \\
in-2004 & 1,382,869 & 13,591,473 & 3.5 & 668 & 9,445.6 & 484 & 58,954 & -- & -- \\
hudong-internal & 1,984,484 & 14,428,382 & 1.9 & 5 & 141.7 & 0 & 21 & -- & -- \\
eu-2005 & 862,664 & 16,138,468 & -- & -- & 39,847.5 & 484 & 56,496 & -- & -- \\
baidu-internal & 2,141,300 & 17,014,946 & 2.4 & 0 & 161.7 & 0 & 63 & -- & -- \\
indochina-2004 & 7,414,768 & 150,984,819 & -- & -- & -- & -- & -- & -- & -- \\
uk-2002 & 18,484,117 & 261,787,258 & -- & -- & -- & -- & -- & -- & -- \\
\midrule \multicolumn{4}{l}{\textbf{\textit{Computer Networks:}}} \\
p2p-Gnutella08 & 6,301 & 20,777 & 0.2 & 0 & 0.1 & 0 & 0 & 412.7 & 7,882 \\
p2p-Gnutella09 & 8,114 & 26,013 & 0.1 & 0 & 0.1 & 0 & 0 & 831.2 & 5,540 \\
p2p-Gnutella06 & 8,717 & 31,525 & 0.1 & 0 & 0.1 & 0 & 0 & -- & -- \\
p2p-Gnutella05 & 8,846 & 31,839 & 0.0 & 0 & 0.1 & 0 & 0 & -- & -- \\
p2p-Gnutella04 & 10,876 & 39,994 & 0.0 & 0 & 0.1 & 0 & 0 & -- & -- \\
p2p-Gnutella25 & 22,687 & 54,705 & 0.2 & 0 & 0.2 & 0 & 0 & -- & -- \\
p2p-Gnutella24 & 26,518 & 65,369 & 0.0 & 0 & 0.2 & 0 & 0 & -- & -- \\
p2p-Gnutella30 & 36,682 & 88,328 & 0.1 & 0 & 0.3 & 0 & 0 & -- & -- \\
p2p-Gnutella31 & 62,586 & 147,892 & 0.1 & 0 & 0.6 & 0 & 0 & -- & -- \\
as-Skitter & 1,696,415 & 11,095,298 & 2,769.8 & 2,123,545 & (1,343.0) & 522 & 1,812 & -- & -- \\
\midrule \multicolumn{4}{l}{\textbf{\textit{Road Networks:}}} \\
roadNet-PA & 1,088,092 & 1,541,898 & -- & -- & 1,699.0 & 1,028 & 51,819 & -- & -- \\
roadNet-TX & 1,379,917 & 1,921,660 & -- & -- & 2,191.6 & 990 & 58,368 & -- & -- \\
roadNet-CA & 1,965,206 & 2,766,607 & -- & -- & 3,043.9 & 495 & 91,847 & -- & -- \\
\midrule \multicolumn{4}{l}{\textbf{\textit{DIMACS Instances:}}} \\
c-fat200-1 & 200 & 18,366 & 1.0 & 1 & 11.3 & 0 & 141 & 0.0 & 3 \\
c-fat200-5 & 200 & 11,427 & 0.2 & 1 & 6.2 & 0 & 48 & 0.0 & 27 \\
hamming10-2 & 1,024 & 5,120 & 0.2 & 0 & 0.1 & 0 & 0 & 0.1 & 512 \\
keller4 & 171 & 5,100 & 4.2 & 4,201 & 24.5 & 5,822 & 60 & 0.0 & 6,800 \\
MANN\_a27 & 378 & 702 & 2.3 & 1,396 & 4.4 & 2,981 & 25 & 0.6 & 9,164 \\
hamming8-4 & 256 & 11,776 & 16.9 & 14,690 & 2.3 & 0 & 119 & 0.1 & 19,567 \\
MANN\_a45 & 1,035 & 1,980 & 77.7 & 123,907 & 15.9 & 6,179 & 41 & 118.1 & 249,186 \\
p\_hat700-2 & 700 & 122,922 & 701.2 & 343,613 & -- & -- & -- & 5.6 & 275,676 \\
p\_hat1500-1 & 1,500 & 839,327 & 9,959.7 & 2,890,004 & -- & -- & -- & 4.3 & 834,181 \\
sanr200\_0.9 & 200 & 2,037 & 702.7 & 1,690,472 & 1,523.7 & 603,714 & 37 & 39.2 & 3,369,435 \\
sanr400\_0.7 & 400 & 23,931 & 24,365.0 & 41,203,755 & -- & -- & -- & 188.7 & 30,154,732 \\
p\_hat700-3 & 700 & 61,640 & -- & -- & -- & -- & -- & 2,408.7 & 88,791,027 \\
C250.9 & 250 & 3,141 & 53,125.4 & 98,016,830 & -- & -- & -- & 2,740.3 & 223,414,645 \\
gen400\_p0.9\_55 & 400 & 7,980 & -- & -- & 7,049.4 & 782,289 & 17 & 20,818.2 & 981,661,757 \\
brock800\_1 & 800 & 112,095 & -- & -- & -- & -- & -- & 10,025.3 & 1,273,480,056 \\
san400\_0.9\_1 & 400 & 7,980 & -- & -- & 5.0 & 0 & 26 & 69,139.5 & 3,762,277,624 \\
C2000.5 & 2,000 & 999,164 & -- & -- & -- & -- & -- & 70,723.5 & 11,749,950,425 \\
C500.9 & 500 & 12,418 & -- & -- & -- & -- & -- & -- & -- \\
keller6 & 3,361 & 1,026,582 & -- & -- & -- & -- & -- & -- & -- \\
gen400\_p0.9\_65 & 400 & 7,980 & -- & -- & 36.7 & 583 & 15 & -- & -- \\
\bottomrule
\end{tabular}
\end{table}
}

We first observe that B\&R and CPLEX clearly outperform MCS on real sparse networks.
Also, except for road networks, B\&R is generally comparable with CPLEX.
B\&R solves several cases that CPLEX fails to solve within the time limit,
such as libimseti, hollywood-2009, and hollywood-2011.
Moreover, for some of the other instances,
such as petster-cat, soc-LiveJournal1, web-Google, and in-2004,
B\&R is orders of magnitude faster than CPLEX.
On the other hand, for a few web graph such as cnr-2000 and eu-2005,
only CPLEX gave an answer within the time limit.

In contrast, on DIMACS instances, as it is tailored to these instances,
MCS generally works better.
The performances of B\&R and CPLEX are comparable.
For example, B\&R solved some of the \emph{p\_hat} instances and \emph{sanr} instances
that CPLEX could not solve,
but CPLEX solved some of the \emph{gen} instances that B\&R could not solve.

Table~\ref{tbl:comparison_oct} lists the results on instances from the \textsc{Odd Cycle Transversal} problem.
For H{\"u}ffner's algorithm,
we describe the number of augmentations (\#A) instead of the number of branches.
We observe that B\&R, CPLEX and MCS strongly outperform H{\"u}ffner's algorithm.

{
\begin{table}[p!]
\center
\caption{Performance comparison of algorithms for \OCT.}
\label{tbl:comparison_oct}
\fontsize{8pt}{0pt}\selectfont
\begin{tabular*}{\columnwidth}{@{\extracolsep{\fill}}lrr|rr|rrr|rr|rr}
\toprule
\multicolumn{3}{c|}{\textbf{Instance}} & \multicolumn{2}{c|}{\textbf{B\&R}}
 & \multicolumn{3}{c|}{\textbf{CPLEX}} & \multicolumn{2}{c|}{\textbf{MCS~\cite{clique/mcs_walcom10}}}
 & \multicolumn{2}{c}{\textbf{H{\"u}ffner~\cite{oct/huffner09}}} \\
\multicolumn{1}{c}{\textbf{Name}} &
\multicolumn{1}{c}{\textbf{$\abs{V}$}} &
\multicolumn{1}{c|}{\textbf{$\abs{E}$}} &
\multicolumn{1}{c}{\textbf{T}} &
\multicolumn{1}{c|}{\textbf{\#B}} &
\multicolumn{1}{c}{\textbf{T}} &
\multicolumn{1}{c}{\textbf{\#B}} &
\multicolumn{1}{c|}{\textbf{\#C}} &
\multicolumn{1}{c}{\textbf{T}} &
\multicolumn{1}{c|}{\textbf{\#B}} &
\multicolumn{1}{c}{\textbf{T}} &
\multicolumn{1}{c}{\textbf{\#A}} \\
\midrule
afro-29 & 552 & 2,392 & 0.4 & 66 & 0.3 & 0 & 55 & 0.2 & 827 & 0.1 & 56,095 \\
afro-45 & 160 & 852 & 0.2 & 55 & 0.4 & 3 & 20 & 0.0 & 402 & 0.1 & 99,765 \\
afro-43 & 126 & 679 & 0.2 & 32 & 0.3 & 41 & 25 & 0.0 & 240 & 0.1 & 102,609 \\
afro-39 & 288 & 1,528 & 0.3 & 58 & 1.3 & 0 & 34 & 0.0 & 876 & 0.4 & 281,403 \\
afro-40 & 272 & 1,376 & 0.4 & 80 & 0.2 & 0 & 47 & 0.0 & 856 & 0.5 & 333,793 \\
afro-28 & 334 & 1,875 & 0.5 & 72 & 0.9 & 22 & 51 & 0.1 & 2,768 & 0.6 & 464,272 \\
afro-38 & 342 & 1,895 & 0.3 & 51 & 0.2 & 0 & 80 & 0.3 & 4,776 & 0.9 & 631,053 \\
afro-14 & 250 & 1,175 & 0.2 & 50 & 1.0 & 9 & 40 & 0.0 & 326 & 2.2 & 1,707,228 \\
afro-19 & 382 & 1,481 & 0.3 & 53 & 1.0 & 0 & 34 & 0.1 & 773 & 1.9 & 1,803,293 \\
afro-24 & 516 & 2,474 & 0.3 & 42 & 0.3 & 0 & 69 & 0.5 & 4,372 & 5.3 & 1,998,636 \\
afro-17 & 302 & 1,417 & 0.9 & 197 & 1.1 & 87 & 62 & 0.3 & 7,046 & 3.1 & 2,342,879 \\
afro-42 & 472 & 2,456 & 0.6 & 114 & 2.0 & 45 & 76 & 0.2 & 2,014 & 41.4 & 22,588,100 \\
afro-32 & 286 & 1,643 & 0.8 & 469 & 1.6 & 133 & 56 & 0.5 & 12,151 & 49.1 & 29,512,013 \\
afro-41 & 592 & 3,536 & 1.0 & 294 & 2.6 & 37 & 78 & 1.5 & 9,334 & 128.2 & 55,758,998 \\
\bottomrule
\end{tabular*}
\end{table}
}

{
\begin{table}[p!]
\center
\caption{Comparison of branching rules.}
\label{tbl:branching_rule}
\fontsize{8pt}{0pt}\selectfont
\begin{tabular*}{\columnwidth}{@{\extracolsep{\fill}}l|rr|rr|rr}
\toprule
& \multicolumn{2}{c|}{\textbf{B0}}
& \multicolumn{2}{c|}{\textbf{B1}}
& \multicolumn{2}{c}{\textbf{B2}} \\
\multicolumn{1}{c|}{\textbf{Instance}} &
\multicolumn{1}{c}{\textbf{T}} & \multicolumn{1}{c|}{\textbf{\#B}} &
\multicolumn{1}{c}{\textbf{T}} & \multicolumn{1}{c|}{\textbf{\#B}} &
\multicolumn{1}{c}{\textbf{T}} & \multicolumn{1}{c}{\textbf{\#B}} \\
\midrule
petster-dog & 6.0 & 3 & 5.8 & 14 & 5.8 & 3 \\
hudong-internal & 2.3 & 19 & 1.8 & 9 & 1.9 & 5 \\
baidu-related & -- & -- & -- & -- & 2.1 & 8 \\
web-Google & 1.3 & 5 & 1.5 & 17 & 1.4 & 10 \\
soc-LiveJournal1 & 11.3 & 314 & 10.8 & 195 & 11.5 & 36 \\
libimseti & -- & -- & -- & -- & 1,642.8 & 472 \\
in-2004 & 37.2 & 30,344 & 28.4 & 21,377 & 3.5 & 668 \\
web-NotreDame & -- & -- & 687.8 & 356,138 & 13.4 & 4,266 \\
web-BerkStan & -- & -- & -- & -- & 142.3 & 42,270 \\
as-Skitter & -- & -- & -- & -- & 2,769.8 & 2,123,545 \\
web-Stanford & -- & -- & -- & -- & 67,270.3 & 55,865,269 \\
\bottomrule
\end{tabular*}
\end{table}
}

{
\begin{table}[p!]
\center
\caption{Comparison of reduction rules.}
\label{tbl:reduction_rules}
\fontsize{8pt}{0pt}\selectfont
\begin{tabular*}{\columnwidth}{@{\extracolsep{\fill}}l|rr|rr|rr|rr|rr}
\toprule
& \multicolumn{2}{c|}{\textbf{R0}}
& \multicolumn{2}{c|}{\textbf{R1}}
& \multicolumn{2}{c|}{\textbf{R2}}
& \multicolumn{2}{c|}{\textbf{R3}}
& \multicolumn{2}{c}{\textbf{R4}} \\
\multicolumn{1}{c|}{\textbf{Instance}} &
\multicolumn{1}{c}{\textbf{T}} & \multicolumn{1}{c|}{\textbf{\#B}} &
\multicolumn{1}{c}{\textbf{T}} & \multicolumn{1}{c|}{\textbf{\#B}} &
\multicolumn{1}{c}{\textbf{T}} & \multicolumn{1}{c|}{\textbf{\#B}} &
\multicolumn{1}{c}{\textbf{T}} & \multicolumn{1}{c|}{\textbf{\#B}} &
\multicolumn{1}{c}{\textbf{T}} & \multicolumn{1}{c}{\textbf{\#B}} \\
\midrule
petster-dog & -- & -- & -- & -- & 4,724.6 & 5,557,005 & 8.5 & 4 & 5.8 & 3 \\
hudong-internal & -- & -- & 24.5 & 185 & 2.0 & 8 & 2.0 & 5 & 1.9 & 5 \\
baidu-related & -- & -- & -- & -- & 1.9 & 152 & 1.9 & 8 & 2.1 & 8 \\
web-Google & -- & -- & 1.5 & 602 & 1.5 & 165 & 1.0 & 10 & 1.4 & 10 \\
soc-LiveJournal1 & -- & -- & -- & -- & 45.0 & 7,500 & 9.6 & 33 & 11.5 & 36 \\
libimseti & -- & -- & -- & -- & 837.0 & 476 & 1,371.5 & 472 & 1,642.8 & 472 \\
in-2004 & -- & -- & -- & -- & 28.0 & 30,824 & 4.6 & 1,504 & 3.5 & 668 \\
web-NotreDame & -- & -- & -- & -- & 747.7 & 1,088,096 & 29.3 & 16,563 & 13.4 & 4,266 \\
web-BerkStan & -- & -- & -- & -- & -- & -- & 7,986.2 & 3,898,313 & 142.3 & 42,270 \\
as-Skitter & -- & -- & -- & -- & 10,507.3 & 16,422,252 & 7,768.1 & 6,262,544 & 2,769.8 & 2,123,545 \\
web-Stanford & -- & -- & -- & -- & -- & -- & -- & -- & 67,270.3 & 55,865,269 \\
\bottomrule
\end{tabular*}
\end{table}
}

{
\tabcolsep = 1mm
\begin{table}[p!]
\center
\caption{Comparison of lower bounds.}
\label{tbl:lower_bounds}
\fontsize{8pt}{0pt}\selectfont
\begin{tabular*}{\columnwidth}{@{\extracolsep{\fill}}l|rr|rr|rr|rr|rr}
\toprule
& \multicolumn{2}{c|}{\textbf{L0}}
& \multicolumn{2}{c|}{\textbf{L1}}
& \multicolumn{2}{c|}{\textbf{L2}}
& \multicolumn{2}{c|}{\textbf{L3}}
& \multicolumn{2}{c}{\textbf{L4}} \\
\multicolumn{1}{c|}{\textbf{Instance}} &
\multicolumn{1}{c}{\textbf{T}} & \multicolumn{1}{c|}{\textbf{\#B}} &
\multicolumn{1}{c}{\textbf{T}} & \multicolumn{1}{c|}{\textbf{\#B}} &
\multicolumn{1}{c}{\textbf{T}} & \multicolumn{1}{c|}{\textbf{\#B}} &
\multicolumn{1}{c}{\textbf{T}} & \multicolumn{1}{c|}{\textbf{\#B}} &
\multicolumn{1}{c}{\textbf{T}} & \multicolumn{1}{c}{\textbf{\#B}} \\
\midrule
petster-dog & 6.1 & 122 & 6.2 & 15 & 6.0 & 7 & 5.4 & 3 & 5.8 & 3 \\
hudong-internal & 2.0 & 16 & 1.9 & 6 & 2.0 & 6 & 1.9 & 5 & 1.9 & 5 \\
baidu-related & 74.7 & 27,047 & 1.9 & 8 & 2.7 & 77 & 2.6 & 68 & 2.1 & 8 \\
web-Google & 1.1 & 47 & 1.0 & 10 & 1.3 & 30 & 1.0 & 30 & 1.4 & 10 \\
soc-LiveJournal1 & 12.6 & 180 & 10.6 & 37 & 12.0 & 122 & 11.0 & 117 & 11.5 & 36 \\
libimseti & -- & -- & -- & -- & 1,747.0 & 476 & 1,776.5 & 472 & 1,642.8 & 472 \\
in-2004 & 4.4 & 878 & 4.1 & 683 & 5.2 & 781 & 4.4 & 757 & 3.5 & 668 \\
web-NotreDame & 17.9 & 13,740 & 14.2 & 5,678 & 13.7 & 5,702 & 13.7 & 4,447 & 13.4 & 4,266 \\
web-BerkStan & 261.2 & 206,209 & 145.3 & 42,503 & 199.6 & 113,184 & 176.0 & 62,877 & 142.3 & 42,270 \\
as-Skitter & 4,963.5 & 6,973,582 & 2,629.6 & 2,153,280 & 4,165.8 & 4,873,893 & 3,968.9 & 4,083,355 & 2,769.8 & 2,123,545 \\
web-Stanford & -- & -- & -- & -- & -- & -- & 64,757.6 & 55,912,396 & 67,270.3 & 55,865,269 \\
\bottomrule
\end{tabular*}
\end{table}
}

\subsection{Analysis}
Finally, we examine the effect of
branching strategies, reduction rules and lower bounds.

\subsubsection{Branching Rules}
We compared the following three branching strategies.
\emph{B0} selects a vertex to branch on in a uniformly random manner,
\emph{B1} branches on a vertex with the minimum degree and
\emph{B2} chooses a vertex with the maximum degree.

Table~\ref{tbl:branching_rule} lists the results,
which show that selecting a vertex with the maximum degree (B2)
is significantly better than other strategies.
This matches the results of theoretical research.
Another interesting finding here is that
the minimum degree strategy (B1) performs better than the random strategy (B0).
This is because
our algorithm incorporates mirror branching (Section~\ref{sec:branching:mirror}),
which occurs more often when branching on vertices with small degrees.

\subsubsection{Reduction Rules}
To examine the effects of reduction rules,
we compare algorithms \emph{R0}--\emph{R4},
which use different sets of reduction rules.
\emph{R0} does not use any reduction rules other than connected component decomposition.
\emph{R1} uses the first three reduction rules: degree-1, dominance, and degree-2 folding
(Section~\ref{sec:reduction:exponential}).
In addition to the first three reduction rules, \emph{R2} uses the LP-based reduction rule (Section~\ref{sec:reduction:lp}).
\emph{R3} also adopts unconfined, twin, funnel, and desk (Section~\ref{sec:reduction:sparse}).
\emph{R4} uses all the reduction rules,
including the packing rule (Section~\ref{sec:reduction:packing}),
which is newly introduced in this paper.

Results are listed in Table~\ref{tbl:reduction_rules}.
We can observe the significant effect of reduction rules on the search space.
Indeed, without reduction rules, R0 cannot solve any problems.
On the other hand, we confirm that search space gets smaller and smaller
by introducing reduction rules
on instances such as web-Google, web-NotreDame and as-Skitter.
We can also see that
the number of problems that can be solved within the time limit
increases by adopting reduction rules.

\subsubsection{Lower Bounds}
\label{sec:experiments:lower_bounds}
Finally, we compare algorithms \emph{L0}--\emph{L4}
using different lower bounds.
\emph{L0} only uses the number vertices currently included to the vertex cover.
\emph{L1}, \emph{L2}, and \emph{L3}
use the clique cover (Section~\ref{sec:lower_bounds:clique}),
LP relaxation (Section~\ref{sec:lower_bounds:lp}), and
cycle cover (Section~\ref{sec:lower_bounds:cycle}), respectively.
\emph{L4} combines all these lower bounds.

Table~\ref{tbl:lower_bounds} describes the results.
It shows that the difference of lower bounds does not drastically affect the
results in comparison to the branching rules and the reduction rules.
As expected, the search space of L4 is the smallest among the five methods in all the instances.
Since L3 is an extension of L2, it works better than L2 in all the instances.
Although L1 works better than L3 in some instances, L3 works better in the other instances.

\section{Conclusion}\label{sec:conclusion}
We investigated the practical impact of theoretical research on branching and reduction rules.
Our experimental results indicated that,
as well as theoretical importance,
development of these techniques indeed leads to empirical efficiency.

\subsubsection*{Acknowledgements.}
The authors are supported by Grant-in-Aid for JSPS Fellows 256563 and 256487.

\bibliographystyle{abbrv}
\bibliography{main}

\newpage
\section*{Appendix}
\appendix
\section{Parameterized Complexity of Vertex Cover above Lower Bounds}\label{sec:aboveLB}
The previous theoretical research has shown that if the LP relaxation gives a lower bound that is close to the optimal
value, \textsc{Vertex Cover} can be efficiently solved in the context of
parameterized complexity~\cite{DBLP:journals/corr/abs-1203-0833}.
In our algorithm, we used two different lower bounds, clique cover and cycle cover, which can give a better lower
bound than LP relaxation.
In this section, we investigate the parameterized complexity of \textsc{Vertex Cover} above these lower bounds and show
that even if these lower bounds are very close to the optimal value, \textsc{Vertex Cover} can become very difficult.

\subsection{Vertex Cover above Clique Cover}
Let us define a parameterized problem, \textsc{Vertex Cover above Clique Cover}.
In this problem, we are given a graph $G$, a clique cover $\mathcal{C}$ of $G$, and a parameter $k$;
our objective is to find a vertex cover of size at most $|V|-|\mathcal{C}|+k$.
Here, $|V|-|\mathcal{C}|$ is the lower bound of the optimal solution obtained from the given clique
cover.
In contrast to LP lower bound,
we prove that this parameterized problem is parameterized NP-hard; i.e., even if the difference between the lower
bound obtained from the clique cover and the optimal value is constant, \textsc{Vertex Cover} is still NP-hard.
\begin{theorem}\label{thm:above_clique}
\textsc{Vertex Cover above Clique Cover} is parameterized NP-hard.
\end{theorem}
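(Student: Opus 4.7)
My plan is to prove Theorem~\ref{thm:above_clique} by giving a polynomial-time reduction from \textsc{3-SAT} whose output has parameter value $k = 0$. Demonstrating NP-hardness for a single fixed value of $k$ is enough to establish parameterized NP-hardness in the sense meant here, since it immediately rules out any algorithm running in time polynomial in the graph size whenever $k$ is a constant.

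The reduction I would use is the textbook one from \textsc{3-SAT} to \textsc{Vertex Cover}. Given a 3-CNF formula $\varphi$ with $m$ clauses, for each clause $c_i = (\ell_{i,1} \vee \ell_{i,2} \vee \ell_{i,3})$ I would introduce a triangle $T_i$ on three fresh vertices corresponding to the three literal occurrences, and then add an edge between any two vertices lying in different triangles whose literals are complementary. Let $G$ denote the resulting graph; note that $|V(G)| = 3m$. A standard argument shows that $\varphi$ is satisfiable if and only if $G$ admits a vertex cover of size at most $2m$: a satisfying assignment yields an independent set of size $m$ by picking one true literal per clause, and conversely any independent set of size $m$ must select exactly one vertex per triangle while avoiding pairs of complementary literals, which in turn defines a consistent satisfying assignment.

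The key structural observation for the hardness proof is that the $m$ triangles themselves constitute a clique cover $\mathcal{C} = \{T_1, \ldots, T_m\}$ of $G$, yielding the built-in lower bound $|V(G)| - |\mathcal{C}| = 3m - m = 2m$. Combined with the elementary fact that every vertex cover of $G$ must include at least two vertices from each triangle (and hence has size at least $2m$ unconditionally), we conclude that $(G, \mathcal{C}, 0)$ is a yes-instance of \textsc{Vertex Cover above Clique Cover} if and only if $\varphi$ is satisfiable. Since the reduction is clearly computable in polynomial time, \textsc{Vertex Cover above Clique Cover} is NP-hard for the fixed parameter value $k = 0$. I do not foresee a substantive obstacle: the argument is largely an assembly of textbook ingredients. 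The only thing that needs a line of care is confirming that each $T_i$ is on its own a clique (which is immediate from construction), so that $\mathcal{C}$ is a syntactically valid clique cover regardless of which inter-triangle edges the formula happens to induce; that is all the problem definition demands.
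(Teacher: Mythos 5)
Your proof is correct, and it establishes exactly what the paper means by parameterized NP-hardness: NP-hardness at the single fixed parameter value $k=0$. However, your reduction is genuinely different from the paper's. The paper uses the Karp-style reduction with both variable and clause gadgets: an edge $\{v_i,\bar{v}_i\}$ per variable, a triangle per clause, and an edge from each triangle vertex to the vertex of its literal; its clique cover consists of the $n$ variable edges plus the $m$ triangles, giving a graph on $2n+3m$ vertices with target $n+2m$. You instead use the \textsc{3-SAT}-to-\textsc{Independent Set} style reduction: clause triangles only, with edges joining complementary literal occurrences across triangles, so the clique cover is just the $m$ triangles and the target is $2m$. Both work. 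Yours is leaner: there are no variable gadgets, and the unconditional lower bound of $2m$ is immediate because the disjoint triangles each force two cover vertices. The paper's version makes extracting the satisfying assignment slightly more mechanical (a tight cover contains exactly one of $v_i,\bar{v}_i$, which directly defines $\pi$), whereas in your argument one must additionally observe that variables whose literals are never selected can be assigned arbitrarily --- a one-line remark you implicitly rely on when saying the selected literals define a consistent assignment. You were also right to flag that inter-triangle edges do not invalidate the clique cover, since the definition only asks for disjoint cliques covering all vertices. Finally, note that your construction would serve the paper's second result equally well: your cover uses only cliques of size $3$, and since $|C|-1=\left\lceil |C|/2\right\rceil$ for $|C|\in\{2,3\}$, the same instance witnesses hardness of \textsc{Vertex Cover above Cycle Cover} (Theorem~\ref{thm:above_cycle}), just as the paper's mixed cover of sizes $2$ and $3$ does.
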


\begin{proof}
We prove the theorem by a reduction from \textsc{3-SAT}.
Let $(X,\mathcal{F})$ be an instance of \textsc{3-SAT},
where $X=\{x_1,x_2,\ldots,x_n\}$ is a set of variables and $\mathcal{F}=\{F_1,F_2,\ldots,F_m\}$ is a set of 3-clauses on
$X$.
The negation of a variable $x$ is denoted by $\bar{x}$.
We write each clause $F_i$ as $F_i=(l_{i,1}\vee l_{i,2}\vee l_{i,3})$, where $l_{i,j}$ is a literal of $X$, i.e.,
$l_{i,j}=x$ or $\bar{x}$ for some $x\in X$.

We reduce the instance of \textsc{3-SAT} to an instance of \textsc{Vertex Cover above Clique Cover} with a
parameter $k=0$ as follows.
For each variable $x_i\in X$, we create two vertices $v_i$ and $\bar{v}_i$ and connect them by an edge.
Let $f$ be a function that maps a literal $x_i$ to the vertex $v_i$ and a literal $\bar{x}_i$ to the vertex $\bar{v}_i$.
For each clause $F_i=(l_{i,1}\vee l_{i,2}\vee l_{i,3})\in\mathcal{F}$, we create three vertices $u_{i,1}$, $u_{i,2}$,
and $u_{i,3}$, and connect them to form a triangle.
Then, for each $j=1,2,3$, we connect $u_{i,j}$ to $f(l_{i,j})$.
Finally, we construct a clique cover $\mathcal{C}$ by taking a clique $\{v_i,\bar{v}_i\}$ of size two from each
variable $x_i\in X$, and a clique $\{u_{i,1},u_{i,2},u_{i,3}\}$ of size three from each clause $F_i\in\mathcal{F}$.
The number of the vertices is $2n+3m$, and the size of this clique cover is $n+m$.
Thus, the lower bound obtained from the clique cover is $n+2m$.

Finally, we prove that, if and only if the instance of \textsc{3-SAT} is satisfiable, the reduced graph has a vertex
cover of size $n+2m$.

$(\Rightarrow)$
We construct a vertex cover $C$ as follows.
Let $\pi$ be a truth assignment that satisfies all the clauses.
For each variable $x_i\in X$, if $\pi(x_i)$ is true, we include $v_i$ to $C$; otherwise, we include
$\bar{v}_i$ to $C$.
This covers an edge between $v_i$ and $\bar{v}_i$.
For each clause $F_i\in\mathcal{F}$, we choose a literal $l_{i,j}$ such that $\pi(l_{i,j})$ is true.
Since $\pi$ is a satisfying assignment, we can always choose such a literal.
Then, we include the two vertices other than $u_{i,j}$ from the triangle $\{u_{i,1},u_{i,2},u_{i,3}\}$ to $C$.
These cover the edges on the triangle.
Moreover, for each $j=1,2,3$, if $\pi(l_{i,j})$ is true, $f(l_{i,j})$ is in $C$; otherwise, $u_{i,j}$ is in $C$.
Therefore, the edge between $u_{i,j}$ and $f(l_{i,j})$ is also covered.
Thus, all the edges are covered by $C$; i.e., $C$ is a vertex cover.
Apparently, the size of $C$ is $n+2m$.

$(\Leftarrow)$
We construct a satisfying assignment $\pi$ as follows.
Let $C$ be a vertex cover of size $n+2m$.
Since the lower bound obtained from the clique cover $\mathcal{C}$ is also $n+2m$, this implies that for each clique
$C_i\in\mathcal{C}$, $C$ contains exactly $|C_i|-1$ vertices from $C_i$.
Therefore, for each variable $x_i\in X$, $C$ contains exactly one of $v_i$ and $\bar{v}_i$.
If $v_i$ is contained in $C$, we assign $\pi(x_i)$ to true; otherwise, we assign $\pi(x_i)$ to false.
Now, we show that this assignment $\pi$ satisfies all the clauses.
For each clause $F_i\in\mathcal{F}$, since $C$ contains exactly two vertices from the triangle
$\{u_{i,1},u_{i,2},u_{i,3}\}$, exactly one vertex $u_{i,j}$ of them is not contained in $C$.
Since $C$ is a vertex cover, its adjacent vertex $f(l_{i,j})$ is contained in $C$.
Thus, $\pi(l_{i,j})$ is true, and therefore, $F_i$ is satisfied by $\pi$.
\end{proof}

\subsection{Vertex Cover above Cycle Cover}

Let us define another parameterized problem \textsc{Vertex Cover above Cycle Cover}.
Similar to the previous problem, we are given a graph $G$, a cycle cover $\mathcal{C}$ of $G$,
and a parameter $k$, and our objective is to find a vertex cover of size at most
$\sum_{C\in\mathcal{C}}\left\lceil\frac{|C|}{2}\right\rceil+k$.
Here, $\sum_{C\in\mathcal{C}}\left\lceil\frac{|C|}{2}\right\rceil$ is the lower bound of the optimal solution obtained
from the given cycle cover.
Similar to \textsc{Vertex Cover above Clique Cover}, \textsc{Vertex Cover above Cycle Cover} also becomes parameterized
NP-hard.

\begin{theorem}\label{thm:above_cycle}
\textsc{Vertex Cover above Cycle Cover} is parameterized NP-hard.
\end{theorem}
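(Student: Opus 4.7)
The plan is to adapt the reduction from \textsc{3-SAT} used in the proof of Theorem~\ref{thm:above_clique} almost verbatim, exploiting the fact that the construction there already decomposes into small cycles whose lower bound matches the clique-cover lower bound.

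Given a \textsc{3-SAT} instance $(X,\mathcal{F})$, I would build the same graph: for each variable $x_i$, create the edge $v_i\bar{v}_i$; for each clause $F_i=(l_{i,1}\vee l_{i,2}\vee l_{i,3})$, create a triangle on $\{u_{i,1},u_{i,2},u_{i,3}\}$ together with the edges $u_{i,j}f(l_{i,j})$ for $j=1,2,3$. The cycle cover $\mathcal{C}$ is taken to consist of the $n$ length-$2$ cycles $\{v_i,\bar{v}_i\}$ (allowed since an edge counts as a cycle of length two) together with the $m$ triangles $\{u_{i,1},u_{i,2},u_{i,3}\}$. Its associated lower bound is $\sum_{C\in\mathcal{C}}\lceil |C|/2\rceil = n\cdot 1 + m\cdot 2 = n+2m$, which equals the bound produced by the clique-cover reduction. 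The parameter is again $k=0$, so the task becomes deciding whether $G$ has a vertex cover of size at most $n+2m$.

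For the forward direction, a satisfying assignment $\pi$ yields precisely the vertex cover constructed in the proof of Theorem~\ref{thm:above_clique}, whose size is $n+2m$. For the reverse direction, any vertex cover $C$ of size $n+2m$ must meet each cycle $C_i\in\mathcal{C}$ in exactly $\lceil|C_i|/2\rceil$ vertices, since that many vertices are necessary to cover a cycle of length $|C_i|$ and the totals are already tight. In particular, $C$ contains exactly one of $\{v_i,\bar{v}_i\}$ for every variable and exactly two of $\{u_{i,1},u_{i,2},u_{i,3}\}$ for every clause, so setting $\pi(x_i)$ according to which of $v_i,\bar{v}_i$ lies in $C$ recovers an assignment; the edge $u_{i,j}f(l_{i,j})$ for the unique $u_{i,j}\notin C$ forces $f(l_{i,j})\in C$ and hence $\pi(l_{i,j})$ true, which satisfies $F_i$.

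The only real content beyond reusing Theorem~\ref{thm:above_clique} is the tightness argument for cycles, and that is standard: a cycle of length $\ell$ requires at least $\lceil\ell/2\rceil$ vertices in any vertex cover, so achieving $\sum_C\lceil|C|/2\rceil$ globally forces per-cycle tightness. This is the step I expect to be the main (and really only) obstacle, but it is routine, so the overall proof is essentially a recycling of the previous construction with the observation that the same graph admits a cycle cover of matching lower bound.
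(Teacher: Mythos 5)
Your proposal is correct and matches the paper's proof: the paper likewise observes that the clique cover from Theorem~\ref{thm:above_clique} consists only of cliques of size two or three, hence is simultaneously a cycle cover with $|C|-1=\left\lceil\frac{|C|}{2}\right\rceil$ for each piece, so the same reduction and the same tightness argument go through with $k=0$. The per-cycle tightness step you flag as the ``only real content'' is exactly what the paper compresses into ``we can use the same argument,'' so there is no substantive difference between the two proofs.
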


\begin{proof}
The proof is almost the same.
The size of each clique $C$ in the clique cover $\mathcal{C}$ used in the proof of Theorem~\ref{thm:above_clique} is two
or three.
Thus, the clique cover $\mathcal{C}$ is also a cycle cover.
Moreover, when $|C|=2$ or $3$, $|C|-1$ equals to $\left\lceil\frac{|C|}{2}\right\rceil$.
Therefore, the lower bound obtained by considering $\mathcal{C}$ as a cycle cover exactly matches the lower bound
obtained by considering $\mathcal{C}$ as a clique cover.
Thus, we can use the same argument.
\end{proof}

\end{document}